\documentclass[11pt,runningheads,a4paper]{llncs}

\usepackage{amssymb}
\setcounter{tocdepth}{3}
\usepackage{hyperref}
\hypersetup{
  colorlinks   = true,    
  urlcolor     = blue,    
  linkcolor    = blue,    
  citecolor    = red      
}

\usepackage{amsmath}

\usepackage{amsfonts}

\newcommand{\tw}{tree-width}
\newcommand{\Tw}{Tree-width}
\newcommand{\pw}{path-width}

\newcommand{\If}{ \text{\bf if}}
\newcommand{\Then}{\text{ \bf then}}

\newcommand{\Return}{\text{\sf Return}}
\newcommand{\For}{\text{\bf for}}

\newcommand{\To}{\text{\bf\ to}}

\newcommand{\Do}{\text{\bf\ do}}

\newlength{\baseprogindent}
\newlength{\incrementprogindent}
\setlength{\baseprogindent}{4mm}
\setlength{\incrementprogindent}{5mm}
\newlength{\progindent}
\newcommand{\ind}[1]{\setlength{\progindent}{\baseprogindent} \addtolength{\progindent}{#1\incrementprogindent} \hspace*{\progindent}}

\newcommand{\algbegin}[2]{\vspace*{1.0ex}{\large \sffamily  #1 #2:} \sffamily}
\newcommand{\algend}{\rmfamily}

\newenvironment{algorithm}[1]{\algbegin{Algorithm}{#1}}{\algend}


\newenvironment{tightlist}  
   {\begin{list}{}%
      {%
        \setlength{\itemsep}{-1\parsep}
        \setlength{\topsep}{0.5ex}
      	}%
   }%
   {\end{list}}

\begin{document}

\mainmatter  

\title{Faster Computation of Path-Width}

\author{Martin F\"urer%
\thanks{Research supported in part by NSF Grant CCF-1320814.}}
\authorrunning{Martin F\"urer}

\institute{Department of Computer Science and Engineering \\
	Pennsylvania State University \\
	University Park, PA 16802,  USA \\
\and	Visiting Theoretical Computer Science \\ 
ETH Z\"urich \\
Z\"urich Switzerland \\
	\email{furer@cse.psu.edu} }

\maketitle

\begin{abstract}
Tree-width and path-width are widely successful concepts. Many NP-hard problems have efficient solutions when restricted to graphs of bounded tree-width. Many efficient algorithms are based on a tree decomposition. Sometimes the more restricted path decomposition is required. The bottleneck for such algorithms is often the computation of the width and a corresponding tree or path decomposition. For graphs with $n$ vertices and tree-width or path-width $k$, the standard linear time algorithm to compute these decompositions dates back to 1996. Its running time is linear in $n$ and exponential in $k^3$ and not usable in practice. Here we present a more efficient algorithm to compute the path-width and provide a path decomposition. Its running time is  $2^{O(k^2)} n$. In the classical algorithm of Bodlaender and Kloks, the path decomposition is computed from a tree decomposition. Here, an optimal path decomposition is computed from a path decomposition of about twice the width. The latter is computed from a constant factor smaller graph.
\end{abstract}

{\bf Keywords}: Path-width, tree-width, Bodlaender's algorithm, path decomposition, FPT.

\newpage

\section{Introduction}
\Tw\ and tree decompositions have been defined by Roberson and Seymour \cite{DBLP:journals/jal/RobertsonS86}. Independently, Arnborg and Proskurowski \cite{DBLP:dblp_journals/dam/ArnborgP89} introduced the equivalent concept of partial $k$-trees, as subgraphs of the previously known, simply structured $k$-trees.
Many NP-hard graph problems have very efficient solutions when the graph is given with a tree decomposition of small width. Indeed, Courcelle's meta-theorem \cite{DBLP:books/el/leeuwen90/Courcelle90} says that all problems expressible in monadic second order logic have a linear time solution for graphs of bounded \tw. Here, the dependence of the running time on the \tw\ is allowed to be really bad. Theoretically this concept is captured by fixed parameter tractability (FPT). A parameterized problem is in FPT, if it can be solved by an algorithm with a running time of the form $O(f(k) n^c)$ for an arbitrary computable function $f(k)$ and some constant $c$, where $n$ is the problem size and $k$ is the parameter. 

Many faster solutions have been designed for specific problems. The goal is always to have efficient solutions for instances with small values of the parameter. For more background information on fixed parameter tractability, see e.g.\ \cite{Downey99,FluGro2006,Nie2006,DBLP:dblp_series/txcs/DowneyF13}.

Tree-width is an important parameter for enabling fast algorithms for interesting classes of graphs. But for some algorithms, the more restricted path-width parameter is of interest. (The Pathwidth entry on Wikipedia lists such applications in VLSI design, graph drawing, and computational linguistics.) The path-width is defined with tree-decompositions where the tree is a path.

Unfortunately, for graphs of small \tw\ or \pw, it is not easy to find a corresponding tree decomposition of minimal width. Computing the \tw\ is NP-hard \cite{Arnborg:1987:CFE:37170.37183}. For constant \tw, a tree decomposition of minimal width can be computed in polynomial time \cite{DBLP:journals/jal/RobertsonS86}. The problem is even solvable by an FPT-algorithm \cite{DBLP:dblp_journals/jct/RobertsonS95b}. But the only known linear time algorithms are variations of Bodlaender's algorithm \cite{Bodlaender96}. Their running time is $2^{\Theta(k^3)} n$. This is too slow to be used in practice. Heuristic algorithms are used instead. Throughout this paper $n = |V|$ is the number of vertices of the graph in question, and $k$ is a width parameter.

For the related notion of tree-depth \cite{NesetrilM2006}, initially Bodlaender's algorithm provided the most efficient way to compute its value and to produce a corresponding tree decompositon. Recently, the exponent in the running time has been decreased from $O(k^3)$ to $O(k^2)$ \cite{ReidlRVS2014}. We want to produce the same improvement for path-width, even though it seems to require a different method.

There have been many efforts to find better approximation algorithms for the \tw. The main goal has been to achieve a small constant factor approximation with a running time $f(k) g(n)$, where $f(k)$ is $2^{O(k)}$ and $g(n)$ is polynomial, preferably linear. This combined goal has been achieved by the recent paper of Bodlaender et al.\ \cite{DBLP:conf/focs/BodlaenderDDFLP13} producing a 5-approximation in time $O(c^k n)$. The authors write, 
``it would be very interesting to have an exact algorithm for testing if the treewidth of a given graph is at most $k$ in $2^{o(k^3)}n^{O(1)}$ time.'' Downey and Fellows \cite{Downey99} remark that Bodlaender's  Theorem, based on the algorithm of Bodlaender and Kloks is impractically exponential in $k$, namely $2^{ck^3}$ where $c \approx 32$, and they write,
``It would be very interesting if this could be reduced to an exponential with exponent linear in $k$.''
We cannot get a linear exponent, but the improvement from $O(k^3)$ to $O(k^2)$ in the exponent for path-width is significant.

We will use some key ingredients of Bodlaender and Kloks \cite{BodlaenderK96} and its improved version of Perkovi\'c and Reed \cite{PerkovicR2000}. First of all, it is the idea that a given tree decomposition is useful for the solution of all kinds of graph problems based on bottom-up dynamic programming. Even the problems of computing tree decompositions or path decompostions themselves are graph problems that can be solved this way. This makes sense, if one wants to compute a tree decomposition of width $k$, when one has available a tree decomposition of width linear in $k$. To obtain the needed constant factor approximation, one can use a top-down construction based on the repeated use of small vertex separators \cite{DBLP:dblp_journals/jct/RobertsonS95b,Reed92}. 
But such an FPT-algorithm runs in $O(n \log n)$ or even quadratic time. Recently, a the 5-approximation has been obtained by Bodlaender et al.\ \cite{DBLP:conf/focs/BodlaenderDDFLP13} running in time $O(c^k n)$. 

For a theoretical result, this approximation would be sufficient, but the high approximation ratio makes the final step very expensive. Therefore, in order to have a chance of a practical algorithm, we show how to modify the Bodlaender and Kloks \cite{BodlaenderK96} method working with 2-approximations.
Finally, we do the critical last step improving the constant factor approximation to an exact solution in time $2^{O(k^2)}$ instead of the previous $2^{O(k^3)}$ for graphs of path-width $k$. Here, the starting approximation ratio affects the constant factor in the exponent hidden by the $O$-notation.

The main idea of  Bodlaender and Kloks \cite{BodlaenderK96} is the following. If a graph has a large matching, then a significantly smaller graph with the same or smaller \tw\ is obtained by collapsing matched pairs of vertices into one. The smaller problem can be solved recursively, and expanding the collapsed pairs again results in a 2-approximation for the width and a corresponding tree decomposition of the original graph. On the other hand, if there is no large matching, then one can add more edges to the graph without increasing the \tw. 
This in turn will create simplicial vertices, i.e., vertices whose neighborhood induces a clique. Simplicial vertices are easy to handle.

For computing the tree-width, these methods did not result in a practical algorithm, because of the cubic exponent and large constant factors. For path-width, with a quadratic exponent and much smaller constant factors, we could have a chance.


We use the standard notions of tree decomposition and a special notion of nice path decomposition.
\begin{definition}
 A \emph{tree decomposition} of a graph $G=(V,E)$ is a pair $(\{B_{p } \, : \, p \in I \}, T)$, 
 where $T$ is a tree, $I$ is the node set of $T$, and the subsets $B_{p} \subseteq V$ have the following properties. (The set of vertices $B_{p}$ associated with $p \in I$ is called the bag of $p$.)
 \begin{enumerate}
\item 
$\bigcup_{p \in I} B_{p} = V$, i.e., each vertex belongs to at least one bag.
\item 
For all edges $e=\{u,v\} \in E$ there is at least one $p \in I$ with $\{u, v\} \subseteq B_{p}$,
i.e., each edge is represented by at least one bag.
\item
For every vertex $v \in V$, the set of indices $p$ of bags containing $v$ induces a subtree of $T$
(i.e., a connected subgraph).
\end{enumerate}
The \emph{tree-width} of $G$ is the smallest $k$ such that $G$ has a tree decomposition with largest bag size $k+1$.

\noindent
A \emph{rooted tree decomposition} is a tree decomposition where $T$ is a rooted tree. We assume all tree edges are oriented towards the root.
\end{definition}

\begin{definition}
 A {\em nice} tree decomposition is a rooted tree decomposition with the following four types of nodes.
 \begin{description}
\item[Leaf node:] 
$p$ has no children, and $|B_p| = 1$. \item[Introduce node:] $p$ has one child $q$ with $B_p = B_q \cup \{v\}$ for some vertex $v \not\in B_q$.
\item[Forget node:] $p$ has one child $q$ with $B_p \cup \{v\} = B_q$ for some vertex $v \not\in B_p$.
\item[Join node:]  $p$ has 2 children $q$ and $q'$ with $B_p = B_q = B_{q'}$.
\end{description}
Furthermore, the root is a forget node with an empty bag.
\end{definition}

As an important concept, \tw\ has several other equivalent definitions. A graph has \tw\ at most $k$, if and only if it is a partial $k$-tree \cite{DBLP:dblp_journals/dam/ArnborgP89}.

\section{Path Decompositions}
One can define a {\em path decomposition} of a graph $G=(V,E)$ to be a tree decomposing $(\{B_{p } \, : \, p \in I \}, T)$ where the tree $T= (I,F)$ is a path.
A {\em rooted path decomposition} is a rooted tree decomposition where the root is an endpoint of the path.

We find it more convenient, to describe a nice path decomposition by the sequence of introduce and forget operations. 
Reminiscent of the traditional definition, we refer to the indices of the sequence as nodes.
Every vertex has its introduce node before its forget node. The first node is the leaf, the last node is the root. It is a forget node.
We consider the leaf node to be an introduce node too.

\begin{definition}
A {\em path decomposition}  of a graph $G=(V,E)$ with $|V|=n$ is a sequence of triples $P = ((p_1,t_1,w_1), \dots, (p_{2n},t_{2n},w_{2n}))$ with the following properties.
\begin{itemize}
\item 
Every vertex $v \in V$ occurs exactly twice in the sequence $(p_1,\dots , p_{2n})$, first, $p_j=v$ with $t_j = +1$ indicating $j$ being the introduce node for $v$, then, $p_{j'}=v$ with $t_{j'} = -1$ indicating $j'$ with $j'>j$ being the forget node for $v$.
\item
The sequence $(w_1, \dots, w_{2n})$ is defined by 
\[w_j = 
	\begin{cases} 0 &\mbox{if $j=1$}  \\ 
 	w_{j-1}+1 & \mbox{if $1<j\leq 2n$ and $t_j = +1$ (introduce node)}\\
	w_{j-1}-1 & \mbox{if $1<j\leq 2n$ and $t_j = -1$ (forget node)}.\\ 
	\end{cases} 
\]
\end{itemize}
\end{definition}
Bags have the traditional meaning and can easily be defined recursively.
\[B_j = 
	\begin{cases} \{p_1\} &\mbox{if $j=1$}  \\ 
 	B_{j-1} \cup  \{p_j\} & \mbox{if $1<j\leq 2n$ and $t_j = +1$ (introduce node)}\\
	B_{j-1} \setminus  \{p_j\}  & \mbox{if $1<j\leq 2n$ and $t_j = -1$ (forget node)}.\\ 
	\end{cases} 
\]
The width of a node $j$ is defined to be 1 less than the number of vertices in its bag $B_j$.
Thus, $w_j$ is the width of the node $j$.

\begin{definition}
 The {\em width of a path decomposition} is the maximum width of any of its nodes. \\
 The {\em path-width} of a graph is the minimum width of any of its path decompositions.
\end{definition}

We use the double factorial $(2n-1)!! = (2n-1)(2n-3) \dots 3 \; 1 = (2n)! / (n! 2^n)$. We count the number of path decompositions.
\begin{proposition} \label{prop:count}
 The number of path decompositions of an $n$-vertex graph is 
 \[ n! (2n-1)!! =  \frac{(2n)!}{2^n} \sim  \sqrt{\pi n}\left( \frac{n}{e}\right)^{2n} 2^{n+1}.\] 
\end{proposition}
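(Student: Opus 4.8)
The plan is to reduce the count to a standard multiset-permutation problem by observing that a path decomposition is completely determined by its sequence of node labels $(p_1,\dots,p_{2n})$. First I would note that in the definition the tags $t_j$ and the widths $w_j$ carry no extra information: by the requirement that each vertex occur exactly twice with its introduce node preceding its forget node, the first occurrence of any vertex forces $t_j=+1$ and the second forces $t_{j'}=-1$, and then the $w_j$ are fixed by the stated recursion. Hence a path decomposition is the same datum as a sequence of length $2n$ over $V$ in which every vertex appears exactly twice.

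Conversely, I would check that every such sequence is a legal path decomposition: labelling the first appearance of each vertex as its introduce node and the second as its forget node automatically places introduce before forget, makes $p_1$ an introduce (leaf) node and $p_{2n}$ a forget node, so all defining conditions hold. This establishes a bijection between path decompositions and arrangements of the multiset that contains each vertex of $V$ twice.

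Then I would count these arrangements in the two forms matching the two sides of the claimed identity. Counting multiset permutations of $2n$ symbols grouped into $n$ pairs of equal symbols gives $(2n)!/(2!)^n = (2n)!/2^n$. Alternatively, and this directly yields the factored form, I would first choose the unordered pairing of the $2n$ positions into $n$ slots, each a matched (introduce, forget) pair; there are $(2n-1)!!$ such pairings. Then I assign the $n$ distinct vertices to these pairs in $n!$ ways, with the earlier position of each pair automatically the introduce node, giving $n!\,(2n-1)!!$. The equality of the two expressions is precisely the identity $(2n-1)!! = (2n)!/(n!\,2^n)$ quoted just before the proposition.

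Finally, the asymptotics follow from Stirling's formula. Writing $(2n)! \sim \sqrt{4\pi n}\,(2n/e)^{2n}$ and dividing by $2^n$, the factor $2^{2n}$ coming from $(2n)^{2n}$ combines with $2^{-n}$ to give $2^{n}$, while $\sqrt{4\pi n}=2\sqrt{\pi n}$ supplies the remaining factor $2$, producing $\sqrt{\pi n}\,(n/e)^{2n}\,2^{n+1}$ as claimed. The only step needing genuine care is the bijection, specifically confirming that the definition imposes no hidden constraint (such as nonnegativity of the widths $w_j$) beyond ``each vertex exactly twice, introduce before forget,'' so that the count really ranges over all multiset permutations; I expect this to be the main, though ultimately minor, obstacle.
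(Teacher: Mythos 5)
Your proof is correct, but it is organized differently from the paper's. The paper fixes the order in which the $n$ vertices are forgotten and proves by induction on $n$ that there are $(2n-1)!!$ decompositions compatible with a fixed forgetting order (the introduce node of the last forgotten vertex can go in any of $2n-1$ places), then multiplies by the $n!$ choices of forgetting order. You instead set up a single bijection between path decompositions and arrangements of the multiset containing each vertex twice, which lets you read off both sides of the identity combinatorially: $(2n)!/2^n$ as a multiset-permutation count, and $n!\,(2n-1)!!$ by first choosing the unordered pairing of the $2n$ positions and then assigning vertices to pairs. Your route avoids induction entirely and makes the equality of the two expressions a consequence of two ways of counting the same set rather than an algebraic identity quoted from outside; the paper's route is shorter to state but leaves the closed form $(2n)!/2^n$ purely formal. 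Your concern about hidden constraints is well placed and resolves favorably: the paper's Definition of a path decomposition as a sequence of triples imposes only ``each vertex occurs exactly twice, introduce before forget'' (in particular it does not require nonnegative intermediate widths, nor that edges be covered by bags), so the count genuinely ranges over all multiset permutations, independent of the edge set --- consistent with the proposition's phrasing ``of an $n$-vertex graph.'' The Stirling computation matches the paper's.
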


\begin{proof}
 Induction on $n$ shows that the number of path decompositions for $n$ vertices which are forgotten in a fixed order is $(2n-1)!!$, as there are $2n-1$ places to introduce the last forgotten vertex. Considering all $n!$ permutations of the forgetting order of the vertices proves the result. Finally Stirling's approximation is used.
 \qed
\end{proof}

We now assume, we are given a path decomposition of width $\ell$, and we want to produce a path decomposition of width $k < \ell$ or conclude that no such decomposition exists. 

\begin{definition}
 The {\em full skeleton} $Q_P(U)$ induced by a non-empty subset $U \subseteq V$ of a path decomposition $P = ((p_1,t_1,w_1), \dots, (p_{2n},t_{2n},w_{2n}))$ of $G =(V,E)$ is obtained from $P$ by replacing all $p_j \in V \setminus U$ by 0.
 \end{definition}
 
 \begin{definition} \label{def:skeleton}
 The {\em skeleton} $Q$ induced by a non-empty subset $U \subseteq V$ of a path decomposition $P = ((p_1,t_1,w_1), \dots, (p_{2n},t_{2n},w_{2n}))$ of $G =(V,E)$ is obtained from $Q_P(U)$ by repeatedly deleting maximal length intervals of the form 
 $ ((p_{j+1},t_{j+1},w_{j+1}, \dots, (p_{j'-1},t_{j'-1},w_{j'-1}))$ with $p_{j+1} = p_{j+2} = \dots = p_{j'-1} = 0$ and 
 \[\min\{w_{j},w_{j'}\} \leq w_{j''} \leq \max\{w_{j},w_{j'}\} \] 
 for all $j''$ with $j < j'' < j'$.
 We refer to this step as simplifying.
 \end{definition}

 Note that every deleted node has a width between the width of the remaining node immediately before it and the width of the remaining node immediately after it. $w_j$ is the width of the $j$th node. Thus, if the $j$th node is an introduce node for vertex $v$, then $w_j$ is the width just after the insertion of vertex $v$, and if the $j$th node is a forget node for vertex $v$, then $w_j$ is the width just after the deletion of vertex $v$.

\begin{definition}
 The {\em width of a skeleton} is its maximum $w_j$ entry. 
\end{definition}

\begin{proposition}
 The width of a path decomposition is equal to the width of any of its skeletons.
\end{proposition}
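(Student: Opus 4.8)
The plan is to track the maximum width through the construction of the skeleton and show it never changes. I would first observe that passing from $P$ to the full skeleton $Q_P(U)$ leaves every entry $w_j$ untouched: only the vertex labels $p_j$ are overwritten by $0$, while the defining sequence $(w_1,\dots,w_{2n})$ is copied verbatim. Hence the maximum $w_j$ entry of $Q_P(U)$ equals the maximum $w_j$ entry of $P$, i.e.\ the width of $P$ equals the width of $Q_P(U)$. It remains to prove that the subsequent simplification steps, which delete nodes but do not alter any surviving $w_j$, preserve the maximum entry.

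The heart of the argument is a single simplification step, after which I would induct on the number of steps. Let $M$ denote the maximum width over the current sequence, and consider deleting a maximal interval of $0$-labeled nodes bracketed by surviving nodes $j$ and $j'$, so that every deleted node $j''$ satisfies $\min\{w_j,w_{j'}\} \le w_{j''} \le \max\{w_j,w_{j'}\}$. Deleting nodes can only remove candidates from the maximum, so the maximum cannot increase. For the reverse inequality, suppose the old maximum $M$ is realized at some deleted node $j''$; then $M = w_{j''} \le \max\{w_j,w_{j'}\}$, while $w_j,w_{j'} \le M$ because $j$ and $j'$ are themselves nodes of the sequence. Therefore $\max\{w_j,w_{j'}\} = M$, so one of the bracketing nodes already attains $M$. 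Since the bracketing nodes are not deleted at this step, a node of width $M$ survives; the same is trivially true if $M$ was attained at a surviving node to begin with. Hence the maximum width is exactly preserved by each step.

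The proposition then follows by induction: every skeleton $Q$ arises from $Q_P(U)$ by finitely many such deletions, each preserving the maximum $w_j$, so the width of $Q$ equals the width of $Q_P(U)$, which equals the width of $P$. Because the argument only uses the defining interval inequality and the fact that the two bracketing nodes survive, it is insensitive to the order in which intervals are removed; this is exactly what is needed for the statement to hold for \emph{any} skeleton rather than a single canonical one. The only point requiring care is the case where the global maximum is attained solely at deleted nodes, which is precisely where the inequality $w_{j''}\le\max\{w_j,w_{j'}\}$ does the work by forcing the maximum onto a surviving endpoint; beyond that the reasoning is routine.
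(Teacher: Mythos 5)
Your proof is correct and is essentially the paper's argument, spelled out in full: the paper's one-line proof ("sequences of deleted nodes are always next to a node whose width is at least equal to the width of any node in the deleted sequence") is exactly your observation that the interval inequality $w_{j''}\le\max\{w_j,w_{j'}\}$ pushes any maximum attained on a deleted interval onto a surviving bracketing node. Nothing further is needed.
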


\begin{proof}
Sequences of deleted nodes are always next to a node whose width is at least equal to the width of any node in the deleted sequence. 
\qed \end{proof}

\section{Overview}
The fastest published linear time algorithm to decide whether the path-width of a graph $G$ is at most $k$, and to produce a width $k$ path decomposition is obtained in two steps.  The first step uses a version of Bodlaender's algorithm \cite{Bodlaender96} to compute a tree decomposition of width $\ell = O(k)$ (or show that none exists). The second step uses the method of Bodlaender and Kloks \cite{BodlaenderK96} to produce a path decomposition of width $k$ (or show that none exists) from the tree decomposition of width $\ell$. Also the first step uses the method of Bodlaender and Kloks in recursive calls, to compute tree decompositions of smaller width from tree decompositions of roughly twice the width for smaller graphs. The improved version of Bodlaender's algorithm by Perkovi\'c and Reed \cite{PerkovicR2000} computes the small width tree decomposition much faster, but like the original version, its running time has an exponent of order $k^3$. A theoretical alternative would be to start with the recent 5-approximation algorithm \cite{DBLP:conf/focs/BodlaenderDDFLP13}, but if an exact solution is desired, the second step would be significantly more expensive due to the higher approximation factor.

We propose a linear time path-width and path decomposition algorithm which recurses on path decompositions rather than the more costly tree decompositions. The exponent is only quadratic in $k$, and there are no large hidden constants. Note that we concentrate on the worst case in terms of the path-width. It is possible that the tree-width is significantly smaller than the path-width (by a factor of up to $\log n$). In this special case, the traditional approach can be faster.

\section{The Efficient Algorithm}
The crucial step of producing our faster path decomposition is to produce a small width path decomposition from one with a constant factor bigger width.

We are given a path decomposition $P$. Let
\[B^*_j = \bigcup_{i=1}^j B_i. \]
Let $G_j = G[B^*_j]$ be the subgraph of $G$ induced by $B^*_j$.

We want to construct a minimum width path decomposition $P'$ of $G$.
As for most efficient algorithms based on small \tw, our path decomposition algorithm uses a bottom-up dynamic programming approach. Any optimal path decomposition $P'$ of $G$ contains a path decomposition $P'_j$ of the subgraph $G_j$ of~$G$.

It is sufficient to try for all small $k$ whether there is a path decomposition of width $k$, and pick one in the affirmative case. For simplicity, we just describe the decision algorithm, because a solution can be found by standard back tracing of the dynamic programming solution.
The basic idea of the algorithm is to produce the skeleton $Q_j$ of an optimal $P'_j$ induced by $B_j$ for $j=1,\dots, 2n$. Then the path-width of $Q_{2n}$ is the path-width of $G$.

Naturally, the problem with this basic idea is that the optimal path decomposition $P'$ is unknown. A pessimistic approach would be to compute all skeletons obtained from all possible path decompositions of $G$. Fortunately, a good compromise is possible. Instead of computing all skeletons $Q_j$, we compute a set $\mathcal Q_j$ of skeletons, with the assurance that $\mathcal Q_j$ contains at least one skeleton $Q_j$ of an optimal path decomposition.

\begin{theorem} \label{thm:width-reduction}
Given a graph $G$, a number $k$, and a path decomposition $P$ of $G$ of width $\ell = 2^{O(k)}$, one can decide whether the \pw\ of $G$ is at most $k$
in time $2^{O(\ell k)} n$. A corresponding path decomposition can be computed with the same time bound.
\end{theorem}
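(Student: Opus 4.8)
The plan is to run a single left-to-right sweep of the dynamic program along the given decomposition $P$, maintaining for each $j = 1, \dots, 2n$ a set $\mathcal{Q}_j$ of candidate skeletons. I want to preserve the invariant that $\mathcal{Q}_j$ contains a skeleton, induced by the current bag $B_j$, of some width-$\le k$ path decomposition $P'_j$ of the induced subgraph $G_j = G[B^*_j]$ whenever such a decomposition exists. The structural fact that makes a bag-indexed skeleton a sufficient summary of the past is that every vertex already forgotten in $P$ by node $j$, i.e.\ every vertex of $B^*_j \setminus B_j$, has all its incident edges inside $G_j$; hence no vertex introduced after node $j$ of $P$ is adjacent to it, and any completion into a decomposition of $G$ interacts with $G_j$ only through the live vertices $B_j$. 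This is exactly why replacing the forgotten vertices by $0$ (the full skeleton of Definition~\ref{def:skeleton}) loses nothing needed for the future, except their contribution to the width profile, which the skeleton retains.

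First I would bound $|\mathcal{Q}_j|$. A simplified skeleton induced by $B_j$ has the $2|B_j| \le 2(\ell+1)$ introduce and forget events of the live vertices as its labeled nodes; by the simplification step of Definition~\ref{def:skeleton}, between consecutive labeled nodes only non-monotone width features survive, which I would argue leaves $O(\ell)$ nodes overall. Since only decompositions of width at most $k$ are kept, each surviving node carries a sign and a width in $\{0, \dots, k\}$, while the labeled nodes carry an ordering of the $B_j$-events constrained by introduce-before-forget. Counting the orderings in the style of Proposition~\ref{prop:count} gives $2^{O(\ell \log \ell)}$, and the width profiles contribute a further $(k+1)^{O(\ell)} = 2^{O(\ell \log k)}$ factor. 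Using the hypothesis $\ell = 2^{O(k)}$, hence $\log \ell = O(k)$, both factors collapse to $|\mathcal{Q}_j| = 2^{O(\ell k)}$.

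Next I would specify the transitions. When node $j$ of $P$ introduces a vertex $v$, I extend each skeleton $Q \in \mathcal{Q}_{j-1}$ by inserting an introduce node and a forget node for $v$ at every admissible pair of positions: admissibility requires that $v$ coexist with all of its (already labeled) neighbors at a common node, and that the resulting $\pm 1$ width increase keep every width $\le k$; I then re-simplify. As there are only $O(\ell)$ insertion positions, this multiplies the number of skeletons by $\mathrm{poly}(\ell)$ and keeps the count within $2^{O(\ell k)}$. When node $j$ forgets $v$, I relabel the two nodes of $v$ to $0$ and re-simplify, thereby removing $v$ from the live set. Because $B_{2n} = \emptyset$ and $G_{2n} = G$, the answer is ``yes'' precisely when $\mathcal{Q}_{2n} \neq \emptyset$, and standard back-tracing through the recorded transitions reconstructs an actual width-$k$ decomposition within the same $2^{O(\ell k)} n$ bound.

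The main obstacle is the correctness of the invariant across simplification, which I would establish by an exchange argument in both directions. Given any width-$\le k$ decomposition $P'$ of $G$, I must show that the skeleton of its restriction $P'_j$ is equivalent to one the transitions produce; conversely, every skeleton in $\mathcal{Q}_j$ must be realizable by a genuine width-$\le k$ decomposition of $G_j$ that still admits all the completions the real decomposition would. The delicate point is to verify that collapsing a monotone-dominated interval of $0$-nodes is safe, i.e.\ that two decompositions sharing a simplified skeleton admit exactly the same admissible future insertions, so that the maximal width an inserted vertex could force is unchanged. This reduces to showing that a skeleton can always be \emph{re-expanded} into a concrete decomposition compatible with any prescribed continuation, which is precisely where the proposition that a path decomposition and its skeleton have equal width does the essential work.
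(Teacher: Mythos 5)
Your overall architecture matches the paper's: a left-to-right sweep over $P$, a set $\mathcal{Q}_j$ of bag-induced skeletons per node, insertion of each new vertex's introduce/forget pair at admissible skeleton positions with pruning at width $k$, and re-simplification at forget nodes. But there are two genuine gaps. First, your skeleton count rests on the claim that a simplified skeleton has $O(\ell)$ nodes overall. That is false: simplification only deletes runs of $0$-nodes whose widths all lie between the widths of the two bounding survivors, so each of the $2\ell+1$ gaps between $B_j$-events can retain a zigzag such as $k,0,k,1,k-1,2,k-2,\dots$ of length up to $2k+1$, giving $\Theta(\ell k)$ surviving nodes in the worst case. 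With the corrected node count, your per-node width enumeration yields $(k+1)^{O(\ell k)} = 2^{O(\ell k \log k)}$, which overshoots the claimed $2^{O(\ell k)}$ (and would turn the corollary's bound into $2^{O(k^2\log k)}$). The paper closes this by invoking the typical-sequence bound of Bodlaender and Kloks \cite[Lemma 3.5]{BodlaenderK96}: only $2^{O(k)}$ width sequences can survive simplification in each gap, hence $2^{O(\ell k)}$ skeletons in total; the $2^{O(\ell\log\ell)}$ orderings of the bag events are then absorbed via $\log\ell = O(k)$, as you note.

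Second, the heart of the correctness proof --- that restricting insertions of later vertices to the surviving skeleton positions loses no optimal decomposition --- is exactly what you label ``the delicate point,'' but you never supply the argument. Deferring it to the proposition that a decomposition and its skeleton have equal width does not do the job: that proposition only says the width is preserved under taking skeletons, and says nothing about whether a future vertex placed inside a deleted interval could be strictly better than any placement the skeleton still offers. The paper's argument is an explicit exchange (``mountain range'') lemma: a deleted interval is monotone-dominated between its endpoints and handles only vertices of $B_{j-1}^*$, which have no edges to vertices introduced later; hence any later introduce/forget placed on a slope can be slid down to the lowest valley of its interval, where it must still cross the same peaks but adds its $+1$ on top of a smaller base width, so the overall width cannot increase. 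This sliding step is what converts an arbitrary optimal $P^{(j-1)}$ into an optimal $P^{(j)}$ whose skeleton the algorithm actually generates; without it (or an equivalent), the invariant that $\mathcal{Q}_j$ retains a skeleton of some optimal $P'_j$ remains unproven.
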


\begin{proof}
 Assume $G$ and its path decomposition $P$ of width $\ell$ is given. Each bag $B_j$ is a set of vertices of $G$ with $|B_j| \leq \ell+1$.
We define $\mathcal A_j$ to be the set of all path decompositions of $G_j = G[B^*_j]$ of width at most $k$.

We will now define an algorithm that visits the nodes of $G$ in the order given by  $P$. For more details see Fig.\ \ref{alg:decrease-path-width}. The algorithm will have the following property.

\begin{claim}
During the visit of node $j$, the algorithm computes a set of skeletons $\mathcal Q_j$ of $\mathcal A_j$, induced by $B_j$. 
This set $\mathcal Q_j$ includes at least one of minimal width.
 \end{claim}
Each set of skeletons is computed from the previously computed set of skeletons of the predecessor $j-1$ in $P$. We will show that if $G_j$ has a path decomposition $P'_j$ of width at most $k' \leq k$, 
then $\mathcal Q_j$ contains at least one skeleton of $G_j$ (induced by $B_j$) of width at most $k'$.

\begin{figure}[thb] 
\begin{algorithm}{Decrease Path-Width}
\begin{tightlist}  
\item[Input:] 
A graph $G$, widths $\ell$ and $k$, and a path decomposition \\
 $P=((p_1,t_1,w_1),\dots,(p_{2n},t_{2n},w_{2n}))$ of $G$ of width $\ell$.
\item[Task:]
Decide whether $G$ has path-width at most $k$.
\item[Comment:]  Will be used with $\ell = 2k+1$.
\item[Comment:]  Compute a sequence of sets of skeletons ${\cal Q}_j$ of path decompositions of $G_j = G[B^*_j]$
 induced by the bag $B_j$ of node $j$, for $j = 1 ,\dots ,  2n$.
\item[Comment:]   For some optimal path decomposition $P'$ of $G$, each ${\cal Q}_j$ contains a skeleton of $P'[Q_j]$, the restriction of $P'$ to $B_j^*$.
\end{tightlist}
\(
\ind{0} \text{/\!/ Start Node} \\
\ind{0} w'_1 = 0; w'_2 = -1; {\cal Q}_1 = \{((p_1,+1,w'_1),(p_1,-1,w'_2))\}\\
\ind{0} \For\ j=2 \To\ 2n \Do \\
\ind{1} 	\text{/\!/ Introduce Node} \\
\ind{1}	\If\ t_j = +1 \Then\ \\
\ind{2}		\For\ \text{each skeleton $Q$ in ${\cal Q}_{j-1}$ create skeletons for ${\cal Q}_{j}$ by} \\
\ind{3}	\text{inserting $(p_j,+1,w'_{i}+1)$ in $Q$ after any position $i$,} \\
\ind{3}\text{inserting $(p_j,-1,w'_{i'}-1)$ in $Q$ after any position $i'$ with $i' > i$, and} \\
\ind{3}	\text{incrementing $w'_{i''}$ for all positions $i''$ in-between,} \\
\ind{3}	\text{as long as it does not increase the width above $k$}, \\
\ind{3}	\text{$i$ is before the forget node in $Q$ of any neighbor $u$ of $v$, and} \\
\ind{3}	\text{$i'$ is after the introduce node in $Q$ of any neighbor $u$ of $v$.} \\
\ind{3}	\text{(Positions include position 0 meaning insertion to the left of $Q$.)} \\
\ind{2}	\If\ {\cal Q}_{j} = \emptyset \Then\ \Return(\mbox{"tree-width $>k$"}). \\
\ind{1} 	\text{/\!/ Forget Node} \\
\ind{1}	\If\ t_j = -1 \Then\ \\
\ind{2}		\For\ \text{each skeleton $Q$ in ${\cal Q}_{j-1}$ create a skeleton for ${\cal Q}_{j}$ by} \\
\ind{3}	\text{replacing $p_j$ both times by 0, and simplifying as specified in Def.\ 				\ref{def:skeleton}.} \\
\ind{0} 	\text{/\!/ Success} \\
\ind{0}	\Return(\min\{\mbox{width of $Q$ : $Q \in {\cal Q}_{2n}$}\})
\)
\end{algorithm}
  \caption{The algorithm Decrease path-width}
  \label{alg:decrease-path-width}
\end{figure}

The algorithm Decrease Path-Width depends on the type of node $j$ in the path decomposition $P$. For every type of node, we now describe the action of the algorithm and prove the claim inductively.

\paragraph{Leaf node $1$:}
The bag of the leaf node $1$ contains just one vertex $v=p_1$, the only skeleton in $\mathcal Q_1$ is a two node skeleton $Q=(v,+1,0),(v,-1,-1)$ ($v$ is introduced and then forgotten). The leaf node 1 has bag $B_1=\{v\}$, the root node 2 has bag $B_2=\emptyset$.
The claim is satisfied, because there is only one path decomposition $P'$ of the one vertex graph $G_1$. Thus $Q$ is the skeleton of the minimal path decomposition $P'$.

\paragraph{Introduce node $j$:}
If $j$ is an introduce node, then the bag $B_j$ of node $j$ of $P$ contains a new vertex $v$ not in the bag $B_{j-1}$ of the predecessor node. More precisely, $B_j = B_{j-1} \cup \{v\}$ with $v \not\in B_{j-1}$.

The algorithm goes through all skeletons $Q$ of $\mathcal Q_{j-1}$. For each $Q$, it creates various skeletons of $\mathcal Q_j$ by inserting an introduce node and a forget node for the new vertex $v$. All suitable places are tried for the insertion of $v$. A place is suitable if it is before the forget nodes of all neighbors $u$ of $v$ in $G$.
Now the forget node of $v$ is inserted somewhere after the introduce node of $v$. This includes the option immediately after the introduce node of $v$ if that place is suitable. Again, all suitable places are tried. A place is suitable if it is after the introduce nodes of $v$ and all its neighbors $u$ in $G$.
A newly created skeleton is discarded rather than put into $\mathcal Q_j$, if its width is greater than $k$.

The algorithm would certainly be correct, if the insertions were tried in all positions of the full skeletons. This would be an extremely slow brute force algorithm. Thus it is crucial to argue that there is no benefit in trying those intervals $I$ of positions in the full skeletons that have been deleted in the proper skeletons. 
 Such an interval $I$ consists of the nodes $j''$ between two positions $j$ and $j'$ with 
$\min\{w'_{j},w'_{j'}\} \leq w'_{j''} \leq \max\{w'_{j},w'_{j'}\} $.
All the vertices introduced and deleted in these intervals are from $B_{i-1}^*$, while all vertices that still have to be included in the path decomposition are from bags after $i$. There are no edges between vertices of these intervals and later introduced vertices. Thus the only concern is the width caused by insertions between $j$ and $j'$.

The widths between positions $j$ and $j'$ can be viewed as a mountain range with the height at $j''$ being the current width $w'_{j''}$ of the bag $B'_{j''}$. If an insert or forget node of a later vertex is placed between positions $j$ and $j'$, it is not always an advantage to place these nodes in the deepest valley, because the width is also affected by mountain tops between the insertion and the deletion of a vertex.

Nevertheless, it is never an advantage to place insertions or deletions of later nodes along the slope of a mountain. All the later nodes can just as well, and often with an advantage, be placed at the bottom of a valley. 
From there, still the same mountain tops have to be crossed, but it can only be an advantage if the new width caused by later placed vertices is added to a smaller width (from a valley) of the earlier placed vertices. 
More precisely, if a highest and a lowest point of an interval $I$ are at its boundaries, and all the intermediate nodes insert and forget vertices of $B_{j-1}^*$, then there is never an advantage to insert or forget a new vertex at any other place on $I$ than at the lowest point. 

To prove the claim by induction, we assume that $\mathcal Q_{j-1}$ contains a skeleton $Q_{j-1}$ of width at most $k$ induced by $B_{j-1}$ corresponding to an optimal path decomposition $P^{(j-1)}$ of $G$. 
If $P^{j-1}$ has any vertices introduced or forgotten on the slopes of a skeleton $Q_{j-1}$, then $P^{(j-1)}$ is modified to a path decomposition $P^{(j)}$ by sliding all these vertices down the slope to the lowest valley of its interval. The width of $P^{(j)}$ is also optimal, because it is not more than the width of $P^{(j-1)}$. The skeleton $Q_j$ of this $P^{(j)}$ is in $\mathcal Q_j$ proving the induction step of the claim.

\paragraph{Forget node $j$:}
If $j$ is a forget node, then the bag $B_j$ of node $j$ of $P$ contains a new vertex $v$ not in the bag $B_{j-1}$ of the predecessor node. More precisely, $B_j = B_{j-1} \setminus \{v\}$ with $v \in B_{j-1}$. 
Now, $\mathcal Q_{j}$ is obtained from $\mathcal Q_{j-1}$ simply by restricting to the smaller set $B_j$, i.e., by replacing both occurrences of $v$ by 0.
This shows the correctness of the claim.

\paragraph{Running time:}
The running time of the algorithm is mainly determined by the number of  skeletons used. We have $O(n)$ nodes $j$ in the path decomposition $P$. For each node, we consider skeletons induced by the $\ell+1$ vertices from the bag $B_j$.
 By Proposition~\ref{prop:count}, we have $2^{O(\ell \log \ell)}$ path decompositions with $\ell+1$ vertices. 
 For each of these path decompositions, we have $2 \ell +1 $ intervals between the nodes where the vertices of the bag $B_j$ are inserted and deleted. These intervals are determined by their sequence of widths.
 
 The lengths of these intervals between two nodes involving vertices of $B_j$ in any skeleton are at most $2k+1$. This is so, because the worst width sequences in such a interval is $\dots ,k-2, 2, k-1, 1, k, 0, k, 1, k-1, 2, k-2, \dots$ and $\dots,k-2, 1, k-1, 0, k, 0, k-1, 1, k-2, \dots$. More importantly, there are only $2^{O(k)}$ such sequences \cite[Lemma 3.5]{BodlaenderK96}.
 
In summary, when handling bag $B_j$ for $j=1,\dots, 2n$, we have $2^{O(\ell \log \ell)}$ path decompositions of $B_j$. Each has $2 \ell +1 $ intervals with $2^{O(k)}$ possible sequences, resulting in $2^{O(\ell \log \ell)} 2^{O(\ell k)} = 2^{O(\ell k)}$ skeletons. 
 Thus for all $2n$ nodes of $P$ together, there are $2^{O(\ell k)}n$ possible skeletons. As the algorithm only makes polynomial time (in $k$) manipulations on each skeleton, the total running time is still $2^{O(\ell k)}n$.
 
 It is standard for dynamic programming algorithms to actually recover the structure (the path decomposition in our case) that has produced the minimum.
\qed \end{proof}

One should notice that there are no large hidden constants involved in the time analysis. 

\begin{corollary}
 The \pw\ of a graph can be computed in time $2^{O(k^2)}n$ for graphs of \pw\ $k$ if a path decomposition of width $O(k)$ is provided.
 \end{corollary}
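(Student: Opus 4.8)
The plan is to obtain the corollary as a direct specialization of Theorem~\ref{thm:width-reduction}, which already supplies both the decision procedure and, as remarked there, a matching construction of an optimal path decomposition. The only gap to bridge is that the theorem is phrased for a fixed target width, whereas the corollary asks to \emph{compute} the \pw; I would close this gap with a short search over candidate widths, and then verify the claimed running time by substituting $\ell = O(k)$ into the theorem's bound.

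First I would set up the instance. Let $k$ be the (unknown) \pw\ of $G$, and let $P$ be the provided path decomposition of width $\ell$, where by hypothesis $\ell = O(k)$. Since $P$ is itself a valid path decomposition, we already know $k \le \ell$; and $\ell = O(k)$ gives $k \ge \ell/c$ for some constant $c$. Hence the \pw\ lies in the known range from $\lceil \ell/c \rceil$ to $\ell$.

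Next I would invoke Theorem~\ref{thm:width-reduction} as a subroutine. For each candidate value $k'$ in that range it decides whether the \pw\ of $G$ is at most $k'$ in time $2^{O(\ell k')} n$, and on success produces a corresponding width-$k'$ decomposition. Because $k' \le \ell = O(k)$, each such call runs in time $2^{O(\ell^2)} n = 2^{O(k^2)} n$, using $\ell = O(k)$. A binary (or even linear) search for the smallest feasible $k'$ pins down the exact \pw; the search makes only $O(\ell) = O(k)$ calls, a factor polynomial in $k$ that is absorbed into the $O$ in the exponent, and the decomposition returned for this smallest $k'$ is optimal. Equivalently, one may run the algorithm of Fig.~\ref{alg:decrease-path-width} once with bound $\ell$: since the \pw\ of $G$ is at most $\ell$, the Claim guarantees that $\mathcal Q_{2n}$ contains a skeleton of minimal width, so its returned minimum is exactly $k$.

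I expect no genuine obstacle here: all the combinatorial work — bounding the number of skeletons, showing that one of minimal width survives, and the interval-length analysis — is already discharged inside Theorem~\ref{thm:width-reduction}. The only points needing care are the two bookkeeping steps above: reducing the optimization to the theorem's decision problem, and checking that the arithmetic simplification $\ell = O(k)$, hence $2^{O(\ell k)} = 2^{O(k^2)}$, truly collapses the exponent while the polynomial-in-$k$ search overhead stays inside the $2^{O(k^2)}$ bound. Both are routine once the theorem is in hand.
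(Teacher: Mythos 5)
Your proposal is correct and matches the paper's own (much terser) argument: the paper likewise reduces the computation to repeated invocations of Theorem~\ref{thm:width-reduction}, trying candidate widths $k=1,2,\dots$ until one succeeds, with the running time collapsing to $2^{O(k^2)}n$ because $\ell = O(k)$. Your added bookkeeping (restricting the search range to $[\lceil \ell/c\rceil,\ell]$ and noting the search overhead is absorbed into the exponent) is a harmless refinement of the same idea.
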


\begin{proof}
 Theoretically, one can just try $k=1,2, \dots$ until one succeeds. Naturally, most of the work for $k-1$ could be used for $k$.
\qed \end{proof}

\section{Computing a Path-Width Approximation}

We use the well known result that in any tree decomposition of any graph containing the complete bipartite graph $K_{p,q}$, there is a bag containing either all the $p$ vertices of one side or all the $q$ vertices of the other side and an additional vertex. Thus if $p$ is greater than the tree-width $k$, then the addition of all edges between the $q$ vertices on the other side does not increase the tree-width or path-width. The graph obtained from $G$ by adding all such forced edges is called the augmented graph.

A vertex is simplicial, if its neighbors form a clique.

\begin{theorem}
 The path-width of a graph can be computed and a corresponding path decomposition can be found in time $2^{O(k^2)} n$ for graphs of path-width $k$.
 \end{theorem}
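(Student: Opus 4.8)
The plan is to combine the width-reduction engine of Theorem~\ref{thm:width-reduction} with a recursive graph-shrinking scheme in the spirit of Bodlaender and Kloks, but carried out entirely on path decompositions rather than tree decompositions. I would phrase the recursion through the following invariant: a call on an $m$-vertex graph $H$ with target parameter $k$ returns an \emph{exact} minimum-width path decomposition of $H$ (or reports that the \pw\ of $H$ exceeds $k$) in time $2^{O(k^2)}m$. Given such a subroutine, a single call on $G$ with $m=n$ proves the theorem, and the outer loop that tries $k=1,2,\dots$ costs only a polynomial-in-$k$ overhead.

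The recursion rests on a dichotomy that forces a constant-factor drop in the vertex count at every level. First I would compute a matching $M$ of $H$ in linear time (a maximal matching suffices for the counting bound). In the \emph{large matching} case, when $|M|\ge c\,m$ for a fixed constant $c$, I contract each matched edge to obtain $H'$ with at most $(1-c)m$ vertices; since \pw\ is minor-monotone, $\pw(H')\le\pw(H)$, so the recursive call is legitimate with the same $k$. Recursing yields an optimal path decomposition of $H'$, and expanding each contracted vertex back into its two endpoints in every bag containing it at most doubles the bags, producing a path decomposition of $H$ of width at most $2\,\pw(H)+1\le 2k+1$, i.e.\ a 2-approximation.

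In the \emph{small matching} case, when $|M|<c\,m$, the graph has a vertex cover of size below $2cm$ and hence an independent set of size greater than $(1-2c)m$. Forming the augmented graph by adding all forced edges does not raise the \pw\ (this is exactly the $K_{p,q}$ argument stated just before the theorem), and it turns a constant fraction of the independent vertices into simplicial vertices, whose neighborhoods are cliques of size at most $k$. Deleting a constant fraction of these simplicial vertices gives $H'$ with at most $(1-c')m$ vertices and $\pw(H')\le\pw(H)$; after recursing, each deleted simplicial vertex $v$ is reinserted into (a copy of) a bag that already contains its clique neighborhood $N(v)$, so this step alone keeps the width at most $\max(\pw(H),|N(v)|)\le k$. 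In either branch the lifted decomposition of $H$ has width $\ell\le 2k+1$.

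I would then feed this width-$\ell$ decomposition with $\ell=2k+1$ into Theorem~\ref{thm:width-reduction}, which returns an exact width-$k$ decomposition (or reports $\pw>k$) in time $2^{O(\ell k)}m=2^{O(k^2)}m$, restoring the invariant at the current level. The running time therefore satisfies $T(m)=T((1-\gamma)m)+2^{O(k^2)}m$ for a fixed constant $\gamma>0$, and summing the geometric series gives $T(n)=2^{O(k^2)}n$. The main obstacle I expect is twofold and concentrated in the small-matching case: proving the structural lemma that a bounded-\pw\ graph with few matched edges has, after augmentation, a linear number of \emph{usable} low-degree simplicial vertices whose removal both shrinks the graph by a constant fraction and does not increase the \pw, and verifying that the approximation factor stays pinned at $2$ rather than compounding across the $O(\log n)$ levels. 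The latter is precisely why the algorithm re-solves \emph{exactly} at each level via Theorem~\ref{thm:width-reduction} instead of carrying an approximate decomposition upward, so that every lift starts again from an optimal decomposition of the contracted or reduced graph.
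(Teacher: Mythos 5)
Your proposal follows essentially the same route as the paper: the matching/simplicial dichotomy forcing a constant-factor shrinkage, recursion with the same target $k$, lifting the recursive solution to a width-$(2k+1)$ path decomposition, and an exact re-solve at every level via Theorem~\ref{thm:width-reduction}, so that the approximation factor never compounds. One local claim in your small-matching branch is wrong, though harmlessly so: reinserting a simplicial vertex $v$ into ``a copy of a bag containing $N(v)$'' does not keep the width at $\max(\pw(H),|N(v)|)\le k$. The natural reinsertion hangs a bag $N(v)\cup\{v\}$ off the path, which yields a caterpillar decomposition rather than a path decomposition; to flatten it back into a path you must insert $v$ into a duplicate of the full bag $B_i\supseteq N(v)$ (the connectivity condition forces you to carry along the vertices of $B_i$ that continue past that point), and this costs one extra unit of width. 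The paper makes exactly this point and accepts width $k+1$ there. Since $k+1\le 2k+1$ and you feed the result into Theorem~\ref{thm:width-reduction} anyway, your overall argument survives. The remaining item you flag as an obstacle --- that a small matching yields, after augmentation, a linear-size set of removable simplicial vertices --- is not proved in the paper either; it is imported from Bodlaender--Kloks and Perkovi\'c--Reed, so citing those results closes that gap.
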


\begin{proof}
We use the results of \cite{BodlaenderK96,PerkovicR2000} that for any graph $G$ of tree-width $k$ one can quickly augment $G$ and find a linear size matching $M$ or a linear size subset $V'$ of simplicial vertices in the augmented graph. 

If $M$ is large, then one recursively computes an optimal path decomposition of width $k$ of the graph with the vertices of $M$ merged. It implies a path decomposition of the original graph $G$ of width at most $2k+1$. It can be improved to a path decomposition of width $k$ as seen in Theorem~{\ref{thm:width-reduction}}. 

If there is a large set $V'$ of simplicial vertices, then one recursively computes an optimal path decomposition of width $k$ of the graph with the vertices $V'$ removed.
If a tree decomposition of width $k$ is found for the graph with the simplicial vertices removed, then immediately such a decomposition can be obtained for the given graph. This does not work for path decompositions. One obtains a caterpillar graph instead.  
Then, we just change the caterpillar decomposition into a path decomposition of width 1 more. This width might not be optimal, but it is a very good approximation, that can be improved to an optimal path decomposition as before.
\qed \end{proof}

The reason why \cite{BodlaenderK96} works with tree decompositions, even when computing path decompositions, might be because the intermediate caterpillar decompositions prevent a direct production of path decompositions. Another reason is that for some graphs the tree-width is smaller than the path-width by a $\log n$ factor. In such situations, it could be better to work with tree-decompositions.

\begin{corollary}
  There is an algorithm computing the path-width, and outputting a corresponding path decomposition in time $2^{O(k^2)} n$, where $k$ is the path-width.
\end{corollary}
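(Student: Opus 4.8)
The plan is to assemble the final corollary by chaining together the two main results already established, so that the only remaining work is to supply the missing front end: an algorithm that produces a path decomposition of width $O(k)$ in time $2^{O(k^2)}n$. The preceding theorem shows how to compute the path-width exactly and output an optimal path decomposition in time $2^{O(k^2)}n$, \emph{given} a constant-factor approximate path decomposition as input. Thus the corollary reduces to exhibiting such an approximation within the same time budget, and then invoking the theorem once at the end.

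First I would spell out the recursive approximation scheme sketched in the proof of the theorem. The key structural input is the result of \cite{BodlaenderK96,PerkovicR2000}: after augmenting $G$ one can, in linear time, find either a linear-size matching $M$ or a linear-size set $V'$ of simplicial vertices. In the matching case, contract the matched pairs to obtain a graph on a constant fraction fewer vertices with path-width at most $k$, recurse to get an optimal-width path decomposition of the contracted graph, and expand the pairs to obtain a decomposition of $G$ of width at most $2k+1$. In the simplicial case, delete $V'$, recurse on the smaller graph, and reinsert the simplicial vertices; here one must handle the subtlety noted in the theorem's proof, namely that reinsertion yields a caterpillar-type decomposition which is then converted to a path decomposition at the cost of one extra unit of width. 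Either way the recursion operates on a graph smaller by a constant factor, so the recursion depth is $O(\log n)$ and the geometrically shrinking sizes make the total work linear in $n$ with a $2^{O(k^2)}$ prefactor, provided each level's local manipulation stays within that bound.

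The crucial observation that makes the whole scheme close up is that the width-reduction theorem is applied \emph{at each level} of the recursion, not merely once at the top. Whenever the recursion produces a decomposition of width up to $2k+1$, I would feed it into Theorem~\ref{thm:width-reduction} with $\ell = 2k+1$, which by that theorem runs in time $2^{O(\ell k)}n = 2^{O(k^2)}n$ and returns a width-$k$ decomposition (or reports that none exists). This both keeps the width from accumulating across recursion levels and guarantees that the input supplied to the final exact computation already has width $O(k)$. Summing $2^{O(k^2)}$ times a geometric series in the subgraph sizes gives the claimed overall bound $2^{O(k^2)}n$.

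I expect the main obstacle to be the simplicial-vertex case, specifically the conversion of the caterpillar decomposition obtained after reinserting the simplicial vertices into a genuine path decomposition while controlling the width increase. One must argue that the single extra unit of width incurred in this conversion is always recoverable by the subsequent application of the width-reduction theorem, so that the approximation factor does not degrade as the recursion unwinds; verifying that the widths telescope correctly rather than compounding is the delicate accounting step. A secondary concern is ensuring the linear-time, constant-factor size reduction of \cite{BodlaenderK96,PerkovicR2000} holds for path-width in the form needed, and that contraction and augmentation genuinely do not increase path-width beyond the stated bounds, which I would justify using the $K_{p,q}$ forced-edge argument quoted just before the preceding theorem.
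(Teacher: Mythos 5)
Your proposal is correct and follows the same route the paper takes: the recursive matching/simplicial-vertex shrinking combined with repeated application of Theorem~\ref{thm:width-reduction} at each level is exactly the content of the theorem immediately preceding this corollary, which you correctly chain together. The paper's own proof of the corollary itself consists only of the one detail you delegate to the theorem statements, namely that the dynamic program can record which option achieved each minimum and be backtraced to output an actual optimal path decomposition rather than just its width.
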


\begin{proof}
 As is usual for dynamic programming algorithms doing some minimization, whenever the algorithm makes a choice, computing the minimum width of two skeletons, it can record which option provided the minimum (an arbitrary choice is sufficient in case of a tie). Then it is easy to trace back to find an actual path decomposition once the global minimum width has been determined.
\qed \end{proof}

\section{Conclusion}
Path-width is an important width parameter. The worst case linear running time for its computation has not been improved over the last two decades. 
Our algorithm is significantly faster for path decomposition than the fastest linear time algorithms for tree decomposition. Furthermore, there are no large hidden constant factors in the expressions for the running time. We conjecture that this algorithm can be implemented to run satisfactory for small path-width. 

The main open problem in this area is to get an improvement of Bodlaender's algorithms of tree-width computation and the production of tree-decompositions of small width. In particular, one would like to know whether an $O(c^k n)$ time algorithm is possible for some constant $c$.
In a wider context the open question is whether similar results are possible for other important width parameters, in particular for clique-width.

\bibliographystyle{splncs03}

\end{document}